\begin{document}

\title{FPGA Design for Pseudorandom Number Generator Based on Chaotic Iteration used in Information Hiding Application
}
\author{Jacques M. Bahi, Xiaole Fang, Christophe Guyeux, Laurent Larger~\thanks{Authors in alphabetic order}}

\maketitle

\begin{abstract}
Lots of researches indicate that the inefficient generation of random numbers is a significant bottleneck for information communication applications. Therefore, Field Programmable Gate Array (FPGA) is developed to process a scalable fixed-point method for random streams generation. In our previous researches, we have proposed a technique by applying some well-defined discrete chaotic iterations that satisfy the reputed Devaney's definition of chaos, namely chaotic iterations (CI). We have formerly proven that the generator with CI can provide qualified chaotic random numbers. In this paper, this generator based on chaotic iterations is optimally redesigned for FPGA device. By doing so, the 
generation rate can be largely improved. Analyses show that these hardware generators can also provide good statistical chaotic random bits and can be 
cryptographically secure too.
An application in the information hiding security field is finally given as an illustrative example.
\end{abstract}

\section{Introduction}
The extremely rapid development of the Internet brings more and more attention to the information security techniques, such as text, image, or video encryption, etc. As a result, highly qualified random sequences, as an inseparable part of encryption techniques, are urgently required. There are two kinds of random sequences: real random sequences generated by physical methods and pseudorandom sequences generated by algorithm simulations, which are in accordance with some kind of probability distributions. The implementation methods for different classes of random number generators are visualized in Figure~\ref{classification}. However, the constructions of the real random sequences are usually poor in speed and efficiency, and require considerably more storage space as well, and these defects restrict their usage in modern cryptography. On the one hand, field programmable gate arrays (FPGAs) have been successfully used for realizing the speed requirement in pseudorandom sequence generation, due to their high parallelization 
capability \cite{Bojani200663,Danger,Tsoi:2003:CFT:938383.938400}. Advantages of such physical generation way encompass performance, design time, power consumption flexibility, and cost. On the other hand, there is a growing interest to use chaotic dynamical systems as PRNGs, among other things due to the unpredictability and distorted-like properties of such systems (\cite{PhysRevE.72.016220,Cecen2009847,Lee2004187}). Nowadays, such chaos-based generators have also been successfully used to strengthen optical communications \cite{cite-key}.
\begin{figure*}
\begin{center}
  \includegraphics[width=12cm]{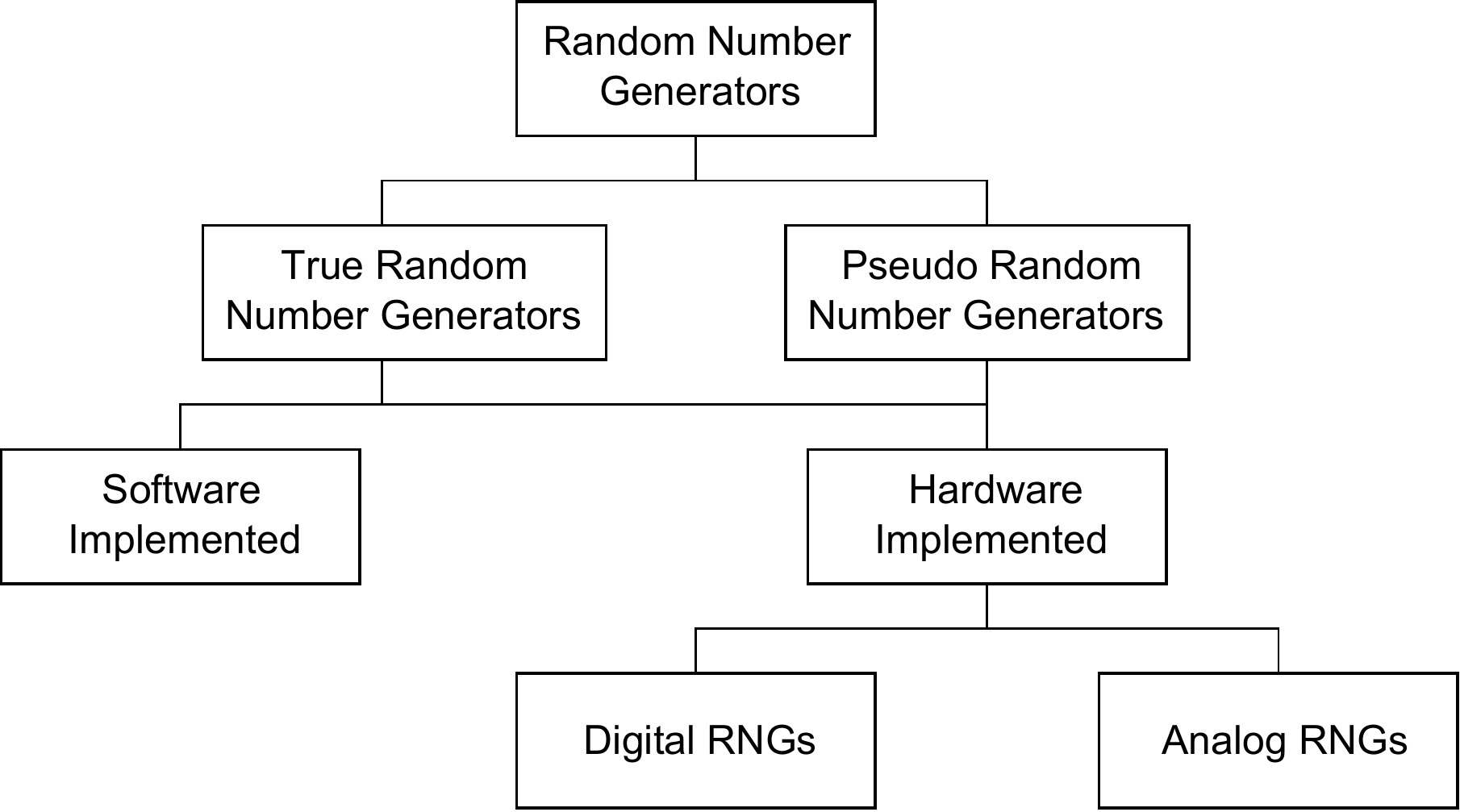}
\end{center}
\caption{Classification of random number generators}
 \label{classification}
\end{figure*} 

A short overview of our previous researches is given thereafter. It has firstly been stated that a tool called chaotic iterations (CIs), used in distributed computing, satisfies the chaotic property as it is defined by Devaney~\cite{citeulike:612765}. The chaotic behavior of CIs has then been exploited to obtain a class of unpredictable PRNGs~\cite{DBLP:journals/corr/abs-1004-2374}. This class receives two given, potentially defective, generators as input and mix them with chaotic iterations, producing by doing so a sequence having a better random profile than the two inputs taken alone~\cite{bfg12a:ip}. Then, in~\cite{bfgw11:ij}, two new versions of such ``CIPRNGs'' have been proposed, involving respectively two logistic maps 
and two XORshifts. 

In this paper, we continue the works initiated in \cite{DBLP:journals/corr/abs-1004-2374,bfgw11:ij,guyeux:topological,bfg12a:ip}: the two approaches introduced before are merged by proposing a discrete chaos-based generator designed on FPGA. The idea is to improve the efficiency of our formerly proposed generators, without any lack of chaos properties. To do so, a new model of CIPRNG Version 1~\cite{DBLP:journals/corr/abs-1004-2374} on Field Programmable Gate Array is introduced and its security is proven in some cases. Additionally, the randomness of this novel proposal is evaluated by the famous NIST test suite (widely used as a randomness standard battery of tests~\cite{ANDREW2008}). Last but not the least, a potential usage of this generator in a cryptographic application is presented. 

\section{Definitions and terminologies}
\label{basic recall}
\subsection{Notations}
\begin{tabular}{@{}c@{}@{}l@{}}
$\llbracket 1;N \rrbracket$ & $\rightarrow\{1,2,\ldots,N\}$ \\
$S^{n}$ & $\rightarrow$ the $n^{th}$ term of a sequence $S=(S^{1},S^{2},\ldots)$ \\
$v_{i}$ & $\rightarrow$ the $i^{th}$ component of a vector: $v=(v_{1},v_{2},\ldots, v_n)$\\
$\emph{strategy}$~ & $\rightarrow$ a sequence which elements belong in $%
\llbracket 1;N \rrbracket $ \\
$\mathbb{S}$ & $\rightarrow$ the set of all strategies \\
$X^\mathds{N}$ & $\rightarrow$ the set of sequences belonging into $X$\\
$\mathbf{C}_n^k$ & $\rightarrow$ the binomial coefficient ${n \choose k} = \frac{n!}{k!(n-k)!}$\\
$+$ & $\rightarrow$ the integer addition \\
$\ll \text{and} \gg$ & $\rightarrow$ the usual shift operators \\
$\mathds{N}^{\ast }$ & $\rightarrow$ the set of positive integers \{1,2,3,...\}\\
$\&$ & $\rightarrow$ the bitwise AND\\
$\oplus$ & $\rightarrow$ the bitwise exclusive or between two integers.
\end{tabular}

\subsection{Blum Blum Shub and XORshift}
The Blum Blum Shub generator~\cite{BBS} (usually denoted by BBS) takes the form:
$$x^{n+1}=\left(x^n\right)^2~ mod~ m, ~~ y^{n+1} = x^{n+1}~ mod~ log(log(m)),$$  
where $m$ is the product of  two prime numbers (these prime numbers  need to be congruent  to 3 modulus 4), and $y^n$ is the returned 
binary sequence. 
\begin{algorithm}
\textbf{Input}: $x$ (a 64-bit word)\\
\textbf{Output}: $r$ (a 64-bit word)\\
\textbf{Parameters}: $a,b,c$ (integers)\\
\begin{algorithmic}[1]
\STATE$x \leftarrow x \oplus (x \ll a);$\\
\STATE$x \leftarrow x \oplus (x \gg b);$\\
\STATE$x \leftarrow x \oplus (x \ll c);$ \\
\STATE$r \leftarrow x;$\\
\STATE\textbf{An arbitrary round of XORshift}~\\
\end{algorithmic}
\caption{XORshift algorithm}
\label{xorshift algorithm}
\end{algorithm}

XORshift, on its part, is a category of very fast PRNGs designed by George Marsaglia \cite{Marsaglia:2003:JSSOBK:v08i14}. 
Algorithm \ref{xorshift algorithm} shows its working procedure. The values of $a,b,c$ decide the offsets of shifting.

\subsection{Chaotic iterations}
\label{subsection:Chaotic iterations}

\begin{definition}
\label{Chaotic iterations}
The set $\mathds{B}$ denoting $\{0,1\}$, let $f:\mathds{B}^{N}\longrightarrow \mathds{B}^{N}$ be an ``iteration'' function and $S\in \mathbb{S}$ be a strategy. Then, the so-called \emph{chaotic iterations} are defined by~\cite{Robert1986}:

\begin{equation}
\left\{\begin{array}{l}
x^0\in \mathds{B}^{N}, \\
\forall n\in \mathds{N}^{\ast },\forall i\in \llbracket1;N\rrbracket%
,x_i^n=
\left\{
\begin{array}{ll}
x_i^{n-1} & \text{if}~S^n\neq i \\
f(x^{n-1})_{S^n}  & \text{if}~S^n=i.
\end{array}
\right.
\end{array}
\right.
\end{equation}
\end{definition}

In other words, at the $n^{th}$ iteration, only the $S^{n}-$th cell is \textquotedblleft iterated\textquotedblright . 
Note that in a more general formulation, $S^n$ can be a subset of components and $f(x^{n-1})_{S^{n}}$ can be replaced by 
$f(x^{k})_{S^{n}}$, where $k < n$, describing for example delays transmission. For the
general definition of such chaotic iterations, see, e.g.,~\cite{Robert1986}.

Chaotic iterations generate a set of vectors (Boolean vectors in this paper), they are defined by an initial 
state $x^{0}$, an iteration function $f$, and a strategy $S$
said to be a ``chaotic strategy''.
Being an iterative process producing binary vectors given a ``seed'' $x^0$, such chaotic iterations 
can be used as pseudorandom number generators. The mathematical fundations of
such a contruction is recalled in the next section.

\subsection{Chaotic iterations as PRNG}

\label{subsec Chaotic iterations as PRNG}
Our generator denoted by $CI_f(PRNG1,PRNG2)$ is de\-si\-gned by the following process. 

Let $\mathsf{N} \in \mathds{N}^*, \mathsf{N} \geqslant 2$. Some chaotic iterations are fulfilled, 
with $f$ as iteration function and $PRNG1$ for strategy, to generate 
a sequence $\left(x^n\right)_{n\in\mathds{N}} \in \left(\mathds{B}^\mathsf{N}\right)^\mathds{N}$ of 
Boolean vectors: 
the successive states of the iterated system. Some of these vectors are randomly extracted using
$PRNG2$, and their 
components constitute our pseudorandom bit flow.

Chaotic iterations are realized as follows. Initial state $x^0 \in \mathds{B}^\mathsf{N}$ is 
a Boolean vector taken as a seed and chaotic strategy 
$\left(S^n\right)_{n\in\mathds{N}}\in \llbracket 1, \mathsf{N} \rrbracket^\mathds{N}$ is 
constructed with PRNG2. Lastly, iterate function $f$ is the vectorial Boolean negation
$$f_0:(x_1,...,x_\mathsf{N}) \in \mathds{B}^\mathsf{N} \longmapsto (\overline{x_1},...,\overline{x_\mathsf{N}}) \in \mathds{B}^\mathsf{N}.$$
To sum up, at each iteration only $S^i$-th component of state $X^n$ is updated, as follows
\begin{equation}
x_i^n = \left\{\begin{array}{ll}x_i^{n-1} & \text{if } i \neq S^i, \\ \\ \overline{x_i^{n-1}} & \text{if } i = S^i. \\\end{array}\right.
\end{equation}

Finally, let $\mathcal{M}$ be a finite subset of $\mathds{N}^*$. Some $x^n$ are selected by a sequence $m^n$ as the pseudorandom bit sequence of our generator, $(m^n)_{n \in \mathds{N}} \in \mathcal{M}^\mathds{N}$ . So, the generator returns the following values: the components of $x^{m^0}$, followed by the components of $x^{m^0+m^1}$, followed by the components of $x^{m^0+m^1+m^2}$, \emph{etc.} In other words, the generator returns the following bits:

$$x_1^{m_0}x_2^{m_0}x_3^{m_0}\hdots x_\mathsf{N}^{m_0}x_1^{m_0+m_1}x_2^{m_0+m_1}\hdots x_\mathsf{N}^{m_0+m_1} x_1^{m_0+m_1+m_2}\hdots$$

\noindent or the following integers:$$x^{m_0}x^{m_0+m_1}x^{m_0+m_1+m_2}\hdots$$

In details, when considering the Boolean negation and two integer 
sequences $p$ and $q$, we obtain the CIPRNG($p$,$q$) 
version 1 published in~\cite{wang2009}: $p$ is $S$ and the output of the generator
is the subsequence $\left(x^{\sigma(n)}\right)_{n\in\mathds{N}}$, where $\sigma(0)=q^0$ and
$\sigma(n+1) = \sigma(n)+q^n$. 
Reason to be of the sequence $q$ is that, between two iterates of chaotic
iterations, at most 1 bit will change in the vector, and thus the sequence
$(x^n)$ cannot pass any statistical test: we must extract a subsequence $(x^{\sigma(n)})$ of $(x^n)$
to produce the outputs.
 CIPRNG($p$,$q$) version 2, for its part, will extract
a subsequence from the strategy $S=p$ to prevent from negating several times a
same position between two outputs.

\begin{example}
If we consider the Boolean negation for $f$, then chaotic iterations 
of Definition~\ref{Chaotic iterations} can be rewritten as: $x^{n+1} = x^n \oplus s^n$, where
$s^n \in \llbracket 0, 2^{\mathsf{N}-1}\rrbracket$ is such that
its $k-$th binary digit is 1 if and only if $k \in S^n$.
Such a particular chaotic iterations will be our generator called 
XOR CIPRNG~\cite{DBLP:journals/corr/abs-1112-5239}.
\end{example}

\subsection{PRNGs based on chaotic iterations}

Let us now recall with more details some previous works in the
field of CIPRNGs: chaotic iteration based 
pseudorandom number generators.

\subsubsection{CIPRNG(PRNG1,PRNG2): Version 1}
Let PRNG1 and PRNG2 be two given generators
provided as input, or ``entropy sources''. The objective of the CIPRNG approach
is to mix them together using chaotic iterations, in such a way
that chaos improve their statistics against well-known batteries of tests, while the speed of the resulted
mixed PRNGs is of the same order than the slowest input.
Additionally, we will show in a further section that if the
PRNG1 is cryptographically secure, then it is the case too
for the mixed CIPRNG(PRNG1,PRNG2).
Thus expected properties of entropy sources could be, for
instance, speed
for PRNG2 and security or good statistical properties for PRNG1,
even though, theoretically speaking, nothing is required for
these inputs except that they must not be totally defective (chaos cannot
correct constant inputs for instance).

\begin{figure}[!t]
\centering
\includegraphics[width=2.5in]{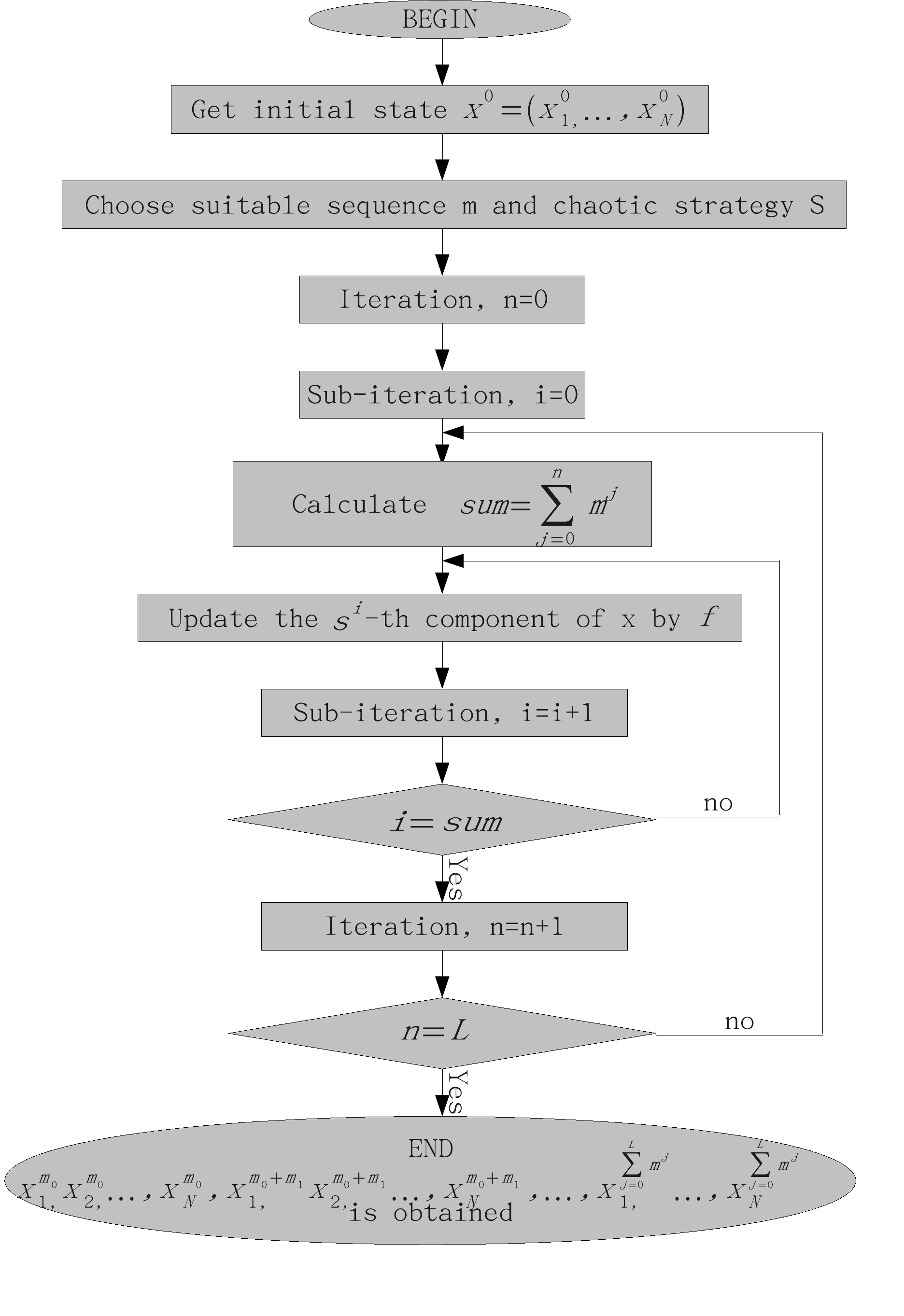}
\DeclareGraphicsExtensions.
\caption{Flow chart of CIPRNG version 1}
\label{Flow chart of chaotic strategy}
\end{figure}

Some chaotic iterations are fulfilled (see Flow chart~\ref{Flow chart of chaotic strategy}) 
to generate a sequence 
$\left(x^n\right)_{n\in\mathds{N}} \in \left(\mathds{B}^N\right)^\mathds{N}$ of Boolean vectors, 
which are the successive states of the iterated system. Some of these vectors are randomly extracted 
and their components constitute the pseudorandom bit flow~\cite{DBLP:journals/corr/abs-1004-2374}.
Chaotic iterations are realized as follows. The initial state $x^0 \in \mathds{B}^N$ is a 
Boolean vector taken as a seed and the chaotic strategy 
$\left(S^n\right)_{n\in\mathds{N}}\in \llbracket 1, N \rrbracket^\mathds{N}$ is constructed with PRNG2.
At each iteration, only the $S^i$-th component of state 
$x^n$ is updated. Finally, some $x^n$ are selected by a sequence $m^n$, obtained
using the PRNG1, as the pseudorandom bit sequence of our generator.

\begin{tiny}
\begin{table*}[!t]
\renewcommand{\arraystretch}{1.3}
\caption{Running example of CIPRNG version 1}
\label{table application example}
\centering
  \begin{tabular}{|c|ccccc|cccccc|cccccc|}
    \hline
$m:$    &   &   & 4 &   &         &   &   & 5 &   &   &          &   &   & 4 &   &          & \\ \hline
$S$     & 2 & 4 & 2 & 2 &         & 5 & 1 & 1 & 5 & 5 &          & 3 & 2 & 3 & 3 &          & \\ \hline
$x^{0}$ &   &   &   &   & $x^{4}$ &   &   &   &   &   & $x^{9}$  &   &   &   &   & $x^{13}$ & \\
1       &   &   &   &   &
1       &   & $\xrightarrow{1} 0$ & $\xrightarrow{1} 1$ & & &
1      &   &   &   & &
1 & \\
0       & $\xrightarrow{2} 1$ & & $\xrightarrow{2} 0$ & $\xrightarrow{2} 1$ &
1       &   &   &   &   &  &
1       &   & $\xrightarrow{2} 0$ & & & 0 &\\
1       &   &   &   &   &
1       &   &   &   &   &  &
1       & $\xrightarrow{3} 0$ & & $\xrightarrow{3} 1$ & $\xrightarrow{3} 0$ &
0 &\\
0       &   & $\xrightarrow{4} 1$  &   & &
1       &   &   &   &   &  &
1       &   &   &   &   &
1 &\\
0       &   &   &   &   &
0       & $\xrightarrow{5} 1$ &   &  & $\xrightarrow{5} 0$ & $\xrightarrow{5} 1$ &
1      &   &   &   &   &
1 &\\
\hline
  \end{tabular}\\
\vspace{0.5cm}
Output: $x_1^{0}x_2^{0}x_3^{0}x_4^{0}x_5^{0}x_1^{4}x_2^{4}x_3^{4}x_4^{4}x_5^{4}x_1^{9}x_2^{9}x_3^{9}x_4^{9}$
$x_5^{9}x_1^{13}x_2^{13}x_3^{13}x_4^{13}x_5^{13}... = 10100111101111110011...$
\end{table*}
\end{tiny}

The basic design procedure of the first version of the CIPRNG generator 
is summed up in Algorithm \ref{version 1 ci}.
The internal state is $x$, whereas $a$ and $b$ are computed by PRNG1 and PRNG2.
See Table~\ref{Chaotic iteration}
for a run example of this CIPRNG version 1.

\begin{algorithm}
\textbf{Input:} the internal state $x$ (an array of $\mathsf{N}$ 1-bit words)\\
\textbf{Output:} an array $r$ of $\mathsf{N}$ 1-bit words
\begin{algorithmic}[1]

\STATE$a\leftarrow{PRNG1()}$;
\STATE$m\leftarrow{a~mod~2+c}$;
\WHILE{$i=0,\dots,m$}
\STATE$b\leftarrow{PRNG2()}$;
\STATE$S\leftarrow{b~mod~\mathsf{N}}$;
\STATE$x_S\leftarrow{ \overline{x_S}}$;
\ENDWHILE
\STATE$r\leftarrow{x}$;
\STATE return $r$;
\medskip
\caption{An arbitrary round of the CIPRNG Version 1}
\label{version 1 ci}
\label{Chaotic iteration}
\end{algorithmic}
\end{algorithm}

\subsubsection{CIPRNG(PRNG1,PRNG2): Version 2 }
The second version of the CI-based generators is designed by the following process~\cite{bfgw11:ij}. First of all, some chaotic iterations have to be done to generate a sequence $\left(x^n\right)_{n\in\mathds{N}} \in \left(\mathds{B}^{32}\right)^\mathds{N}$ of Boolean vectors, which are the successive states of the iterated system. Some of these vectors will be randomly extracted and the pseudorandom bit flow will be constituted by their components. Such chaotic iterations are realized as follows. 
\begin{itemize}
\item Initial state $x^0 \in \mathds{B}^{N}$ is a Boolean vector taken as a seed.
\item Chaotic strategy $\left(S^n\right)_{n\in\mathds{N}}\in \llbracket 1,N \rrbracket^\mathds{N}$ is
an irregular decimation of the PRNG2 sequence.
\end{itemize}
At each iteration, only the $S^i$-th component of state $x^n$ is updated using the vectorial
negation, as follows: $x_i^n = x_i^{n-1}$ if $i \neq S^i$, else $x_i^n = \overline{x_i^{n-1}}$.
Finally, some $x^n$ are selected by a sequence $m^n$ as the pseudorandom bit sequence of our generator, where
$(m^n)_{n \in \mathds{N}} \in \mathcal{M}^\mathds{N}$ is computed from PRNG1.
\begin{algorithm}
\textbf{Input:} the internal state $x$ ($\mathsf{N}$ bits)\\
\textbf{Output:} a state $r$ of $\mathsf{N}$ bits
\begin{algorithmic}[1]
\FOR{$i=0,\dots,N$}
{
\STATE$d_i\leftarrow{0}$\;
}
\ENDFOR
\STATE$a\leftarrow{PRNG1()}$\;
\STATE$m\leftarrow{g_1(a)}$\;
\STATE$k\leftarrow{m}$\;
\WHILE{$i=0,\dots,k$}

\STATE$b\leftarrow{PRNG2()~mod~\mathsf{N}}$\;
\STATE$S\leftarrow{b}$\;
    \IF{$d_S=0$}
    {
\STATE      $x_S\leftarrow{ \overline{x_S}}$\;
\STATE      $d_S\leftarrow{1}$\;
    
    }
    \ELSIF{$d_S=1$}
    {
\STATE      $k\leftarrow{ k+1}$\;
    }\ENDIF
\ENDWHILE
\STATE $r\leftarrow{x}$\;
\STATE return $r$\;
\medskip
\caption{An arbitrary round of the CIPRNG Version 2}
\label{version 2 ci}
\end{algorithmic}
\end{algorithm}
The basic design procedure of this CIPRNG Version 2 generator is summarized in Algorithm \ref{version 2 ci}. The internal state is $x$. $a$ and $b$ are those computed by the two inputted PRNGs. Finally, the value $m$ is the integers sequence defined in Eq.(\ref{Formula}).

\begin{equation}
\label{Formula}
m^n = g_1(S^n)=
\left\{
\begin{array}{l}
0 \text{ if }0 \leqslant{S^n}<{C^0_{32}},\\
1 \text{ if }{C^0_{32}} \leqslant{S^n}<\sum_{i=0}^1{C^i_{32}},\\
2 \text{ if }\sum_{i=0}^1{C^i_{32}} \leqslant{S^n}<\sum_{i=0}^2{C^i_{32}},\\
\vdots~~~~~ ~~\vdots~~~ ~~~~\\
N \text{ if }\sum_{i=0}^{N-1}{C^i_{32}}\leqslant{S^n}<1.\\
\end{array}
\right.
\end{equation}

\section{Security Analysis of CIPRNG Version 1}
\label{Security Analysis Version 1 CI}
In this section the concatenation of two strings $u$ and $v$ is classically denoted by $uv$. In a cryptographic context, a pseudorandom generator is a deterministic algorithm $G$ transforming strings into strings and such that, for any seed $s$ of length m, $G(s)$ (the output of $G$ on the input $s$) has size $l_G(m)$ with $l_G(m) > m$. The notion of secure PRNGs can now be defined as follows. 

\subsection{Algorithm expression conversion}
For the convenience of security analysis, CIPRNG Version 1 detailed in Algorithm \ref{Chaotic iteration} is converted as in Eq.(\ref{Version 1 CI Eq}), where internal state is $x$, $S$ and $T$ are those computed by PRNG1 and PRNG2, whereas at each round, $x^{n-1}$ is updated to $x^n$. 

\begin{equation}
\left\{
\begin{array}{l}
x^0 \in \llbracket 0, 2^\mathsf{N}-1 \rrbracket, S \in \llbracket 0, 2^\mathsf{N}-1 \rrbracket^\mathds{N}, T \in \llbracket 0, 2^\mathsf{N}-1 \rrbracket^\mathds{N}\\
C = S^n \& 1 + 3*N\\
w^0 = {T}^m~mod~N, w^1 = {T}^{m+1} \& 3, ... w^{C-1} = {T}^{m+C-1} \& 3\\ 
d^n = (1 \ll w^0) \oplus (1\ll w^1) \oplus ... (1 \ll w^{C-1})\\
\forall n \in \mathds{N}^*, x^n = x^{n-1} \oplus d^n.
\end{array}
\right.
\label{Version 1 CI Eq}
\end{equation}

\subsection{Security notion}
\label{proof1}
\begin{definition}
\label{CSPRNG}
A cryptographic PRNG $G$ is secure if for any probabilistic polynomial time algorithm D, for any polynomial p, and for all sufficiently large m's,  
\begin{equation}
|Pr[D(G(U_m))=1]-Pr[D(U_{l_G(m)})=1]<\frac{1}{p(m)},
\end{equation}
where $U_r$ is the uniform distribution over $\{0, 1\}^r$ and the probabilities are taken over $U_m$, $U_{l_G(m)}$ as well as over the internal coin tosses of $D$.
\end{definition}

Intuitively, it means that there is no polynomial time algorithm that can distinguish a perfect uniform random generator from $G$ with a non negligible probability. Note that it is quite easily possible to change the function $l$ into any polynomial function $l'$ satisfying $l'(m)>m$.

The generation schema developed in Eq.\ref{Version 1 CI Eq} is based on two pseudorandom generators. Let $H$ be the ``PRNG1'' and $I$ be the ``PRNG2''. We may assume, without loss of generality, that for any string $S_0$ of size $L$, the size of $H(S_0)$ is $kL$,  then for any string $T_0$ of size $M$, it has $I(T_0)$ with $kN$, $k > 2$. It means that $l_H(N) = kL$ and $l_I(N) = kM$. Let $S_1,...,S_k$ be the string of length $L$ such that $H(S_0) = S_1 ... S_k$ and $T_1,...,T_k$ be the string of length $M$ s.t. $H(S_0) = T_1 ... T_k$ ($H(S_0)$ and $I(T_0)$ are the concatenations of $S_i$'s and $T_i$'s). 

The generator $X$ defined in Algorithm \ref{Version 1 CI Eq} is mapping any string $x_0S_0T_0$, of length $L+M+N$, into the string $x_0 \oplus d^1, x_0 \oplus d^1 \oplus d^2,...(x_0 \bigoplus^{i=k}_{i=0}d^i)$, c.f. Eq.(\ref{Version 1 CI Eq}). One in particular has $l_X(L+M+N) = kN = l_H(N)$ and $k > M+L+N$. We announce that if the inputted generator $H$ is cryptographically secure, then the new one defined in Eq.(\ref{Version 
1 
CI Eq}) is secured too.

\begin{proposition}
\label{cryptopreuve}
If PRNG1 is a secure cryptographic generator, then for all PRNG2, we can have that $X$ is a secure cryptographic
PRNG too.
\end{proposition}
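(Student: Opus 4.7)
The plan is a standard reduction argument: I assume for contradiction that $X$ is not a secure cryptographic PRNG and convert any distinguisher against $X$ into a distinguisher against $H = $ PRNG1, contradicting the cryptographic security of $H$ in the sense of Definition~\ref{CSPRNG}. Concretely, suppose there exists a probabilistic polynomial time algorithm $D$ and a polynomial $p$ such that
\begin{equation}
\left| Pr[D(X(U_{N+L+M})) = 1] - Pr[D(U_{kN}) = 1] \right| \geq \frac{1}{p(m)}
\end{equation}
for infinitely many $m$. I will build a probabilistic polynomial time $D'$ with comparable distinguishing advantage against $H$.

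The distinguisher $D'$ takes as input a string $w$ of length $kL$ (purporting to be either $H(U_L)$ or $U_{kL}$) and proceeds as follows: sample $x_0 \leftarrow U_N$ and $T_0 \leftarrow U_M$ independently and uniformly at random; compute $I(T_0)$ and parse it into blocks $T_1, \ldots, T_k$; parse $w$ as $S_1, \ldots, S_k$; then simulate the CIPRNG update rule of Eq.~(\ref{Version 1 CI Eq}) with these inputs to produce $y = (x^1, x^2, \ldots, x^k)$; finally output $D(y)$. When $w = H(U_L)$, the triple $(x_0, w, T_0)$ is uniformly distributed over $\{0,1\}^{N+L+M}$ and drives the simulation exactly as in the definition of $X$, hence the distribution of $y$ coincides with $X(U_{N+L+M})$ and $Pr[D'(H(U_L))=1] = Pr[D(X(U_{N+L+M}))=1]$. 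When $w = U_{kL}$, the simulated $S^n$'s are truly uniform; in this case I claim that the output $y$ is statistically close to $U_{kN}$, since $x_0$ is uniform and independent of $(d^1, \ldots, d^k)$ so $x^1 = x_0 \oplus d^1$ is already uniform and the remaining blocks are determined by $x^1$ together with the XOR-differences $d^2, \ldots, d^k$, while a standard character-sum estimate on each $d^n$ (an XOR of $C^n$ independently drawn point masks whose positions are taken from $T$) yields that its distribution on $\{0,1\}^N$ is close to uniform in total variation.

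Combining the two cases through the triangle inequality then gives $|Pr[D'(H(U_L))=1] - Pr[D'(U_{kL})=1]| \geq 1/p(m) - \varepsilon(m)$ with $\varepsilon(m)$ negligible, contradicting the cryptographic security of $H$. The main obstacle is precisely this second case: the output $y$ is not literally uniform on $\{0,1\}^{kN}$ even when $S$ is, because consecutive blocks are related by $x^n = x^{n-1} \oplus d^n$ and $d^n$ is highly structured. Its Fourier coefficient at a nonzero weight-$a$ character equals $(1 - 2a/N)^{C^n}$, so the round size $C^n$ must be taken large enough relative to $N$ (and the $T$-blocks driving successive rounds must be shown to be disjoint, hence independent) for the total variation distance between $y$ and $U_{kN}$ to be negligible. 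Once this combinatorial estimate is in place, the rest of the reduction is routine and preserves polynomial running time, closing the argument.
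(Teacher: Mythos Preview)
Your high-level strategy---contraposition, turning a distinguisher $D$ for $X$ into a distinguisher $D'$ for $H$---is exactly the paper's. The substantive divergence is in how $D'$ handles PRNG2: you run the actual generator $I$ on a fresh seed $T_0$, whereas the paper's $D'$ discards PRNG2 entirely and samples the mask positions $u_l$ as fresh uniform bits (step~3 of its construction). That single choice is what makes the two analyses land on opposite branches.

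This is also where your argument has a genuine gap. In the ``$w=U_{kL}$'' case you need the simulated output to be close to $U_{kN}$, and you propose a character-sum bound on each $d^n$, describing it as ``an XOR of $C^n$ independently drawn point masks whose positions are taken from $T$.'' But those positions are \emph{not} independently drawn: they are the deterministic output of PRNG2 on the seed $T_0$, and the proposition is stated for \emph{all} PRNG2. If PRNG2 is, say, constant, every $w^j$ is the same index, each $d^n$ lies in $\{0,e_{w^0}\}$, and all output blocks agree with $x^0$ outside one fixed coordinate; your Fourier estimate is vacuous and the claimed closeness to $U_{kN}$ collapses. Even if one granted independent uniform positions, the nonzero coefficient you write, $(1-2a/N)^{C^n}$ with $C^n\in\{3N,3N+1\}$, is a fixed constant (about $e^{-6}$ at weight $a=1$), not negligible in the security parameter, so the total-variation bound does not follow as stated.

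The paper sidesteps this by sampling the $u_l$ uniformly inside $D'$, so the analysis of the uniform-$w$ branch never touches PRNG2; it then invokes the bijection $\varphi_y$ to identify the resulting $z$ with $U_{kN}$ directly, without any approximate Fourier step. The price is that the other branch---matching $D'(H(U_L))$ with $D(X(U_{N+M+L}))$---becomes the delicate one, since $D'$ no longer literally simulates $X$. Your construction gets that branch for free (you do simulate $X$) but pays in the uniform branch. As written, your uniform-branch argument cannot be closed without either an extra hypothesis on PRNG2 or switching to the paper's fresh-randomness construction for $u$.
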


\begin{proof}
The proposition is proven by contraposition. Assume that $X$ is not
secure. By definition, there exists a polynomial time probabilistic
algorithm $D$, a positive polynomial $p$, such that for all $k_0$ there exists
$L+M+N\geq {k_0}$ satisfying 
$$| \mathrm{Pr}[D(X(U_{L+M+N}))=1]-\mathrm{Pr}[D(U_{kN}=1)]|\geq \frac{1}{p(L+M+N)}.$$

Consider a word $w$ of size $kL$.
\begin{enumerate}
 \item Decompose $w$ into $w = w_1...w_k$.
 \item Pick a string $y$ of size $N$ uniformly at random.
 \item Pick a string of size $(3kN + \sum_{j=1}^{j=k}(w_j\&1)) M$: $u$.
 \item Decompose $u$ into $u = u_1...u_{3kN + \sum_{j=1}^{j=k}(w_j\&1)}$.
 \item Define $t_i = (\bigoplus_{l=3N(i-1)+(\sum_{l=1}^{l=i-1}(w_j\&1))+1}^
 {j=3N(i)+(\sum_{j=1}^{j=i}(w_j\&1))}(1<<u_l))$.
 \item Compute $z = (y\oplus t_1) (y\oplus t_1 \oplus t_2) ... (y\bigoplus_{i=1}^{i=k}(t_i))$.
 \item Return $D(z)$.
\end{enumerate}

On one hand, consider for each $y\in \mathbb{B}^{kN}$ the function $\varphi_{y}$ from $\mathbb{B}^{kN}$ into $\mathbb{B}^{kN}$ mapping $t=t_1\ldots t_k$ (each $t_i$ has length $N$) to $(y\oplus t_1 )(y\oplus t_1\oplus t_2)\ldots (y \bigoplus_{i=1}^{i=k} t_i)$.  On the other hand, treat each $u_l \in \mathbb{B}^{(3Nk + \sum_{j=0}^{j=k}(w_j\&1)) M}$ by the function $\phi_{u}$ from $\mathbb{B}^{(3kN + \sum_{j=0}^{j=k}(w_i\&1)) M}$ into $\mathbb{B}^{kN}$ mapping $w = w_1 \ldots w_k$ (each $w_i$ has length $L$) to: \\ $(\bigoplus_{l=1}^{l=3N+(w_1\&1)}(1<<u_l))  ((\bigoplus_{l=1+3N+(w_1\&1)}^{l=6N+(w_1\&1)+(w_1\&1)}(1<<u_l))$\\  $ \ldots (\bigoplus_{l=3N(k-1)+\sum_{j=1}^{j=k-1}(w_j\&1)}^{l=3Nk+\sum_{j=1}^{j=k}(w_j\&1)}(1<<u_l))$.\\ 
By construction, one has for every $w$,
\begin{equation}\label{PCH-1}
D^\prime(w)=D(\varphi_y(\phi_u(w))).
\end{equation}

Therefore, and using Eq.(\ref{PCH-1}),
one has\\
$\mathrm{Pr}[D^\prime(U_{kL})=1]=\mathrm{Pr}[D(\varphi_y(\phi_u(U_{kL})))=1]$ \\ and,
therefore, 
\begin{equation}\label{PCH-2}
\mathrm{Pr}[D^\prime(U_{kL})=1]=\mathrm{Pr}[D(U_{kN})=1].
\end{equation}

Now, using Eq.(\ref{PCH-1}) again, one has  for every $x$,
\begin{equation}\label{PCH-3}
\mathrm{Pr}[D^\prime(U_{H(x)})=1]=\mathrm{Pr}[D(\varphi_y(\phi_u(U_{H(x)})))=1].
\end{equation}

Since where $y$ and $u_j$ are randomly generated. By construction, $\varphi_y(\phi_u(x))=X(yu_1w)$, hence 

\begin{equation}\label{PCH-4}
\mathrm{Pr}[D^\prime(H(U_{kL}))=1]=\mathrm{Pr}[D(X(U_{N+M+L}))=1].
\end{equation}

Compute the difference of Eq.(\ref{PCH-4}) and Eq.(\ref{PCH-3}), one can deduce that there exists a polynomial time probabilistic algorithm $D^\prime$, a positive polynomial $p$, such that for all $k_0$ there exists $L+M+N\geq {k_0}$ satisfying $$| \mathrm{Pr}[D^\prime(H(U_{KL}))=1]-\mathrm{Pr}[D(U_{kL})=1]|\geq \frac{1}{p(L+M+N)},$$ proving that $H$ is not secure, which is a contradiction. 
\end{proof}

Compared to stream ciphers, which are symmetric key ciphers where plaintext digits are combined with 
a pseudorandom cipher digit stream (keystream), the CIPRNG method can be described as a post-treatment
on two inputted PRNGs, that:
\begin{enumerate}
 \item add chaotic properties to these generators,
 \item by doing so, improve their statistical properties when the inputs are defective,
 \item while preserving their security, for instance when one of the input is cryptographically secure.
\end{enumerate}
If PRNG1 is already used as a keystream in a stream cipher, because it is cryptographically secure,
then the combined CIPRNG(PRNG1,XORshift), which runs potentially faster
than PRNG1, can be used too as a keystream. The security comparison between CIPRNG and other designs 
is thus summarized in
Proposition~\ref{cryptopreuve}: the security of\linebreak CIPRNG(PRNG1,PRNG2) is directly related to the
one of PRNG1, meaning that if PRNG1 is secure, then the resulted CIPRNG is secure too.

\section{CIPRNG Version 1 Designed for FPGA}
\subsection{An efficient and cryptographically secure PRNG based on CIPRNG  Version 1}
In Algorithm \ref{Version 1 CI cs} is given an efficient and cryptographically secure generator
suitable for FPGA applications. It is based on CIPRNG Version 1 and thus presents a good random 
statistical profile.

\begin{algorithm}
\textbf{Notice}: xorshift1, xorshift2 (64-bit XORshift generators)\\
\textbf{Input}: $z$ (a 16-bit word)\\
\textbf{Output}: $r$ (a 16-bit word)\\
\begin{algorithmic}[1]
\STATE$x \leftarrow xorshift1();$\\
\STATE$y \leftarrow xorshift2();$\\
\STATE$z1 \leftarrow x \& 0xffffffff$\\
\STATE$z2 \leftarrow (x >> 32) \& 0xffffffff$\\
\STATE$z3 \leftarrow y \& 0xffffffff$\\
\STATE$z4 \leftarrow (y >> 32) \& 0xffffffff$\\
\STATE$t \leftarrow bbs();$\\
\STATE$t1 \leftarrow t \& 1;$\\
\STATE$t2 \leftarrow t \& 2;$\\
\STATE$t3 \leftarrow t \& 4;$\\
\STATE$t4 \leftarrow t \& 8;$\\
\STATE$w1 \leftarrow 0;$\\
\STATE$w2 \leftarrow 0;$\\
\STATE$w3 \leftarrow 0;$\\
\STATE$w4 \leftarrow 0;$\\
\WHILE{$i=0,\dots,11$}
\STATE$w1 \leftarrow (w1 \oplus (1 \ll ((z1 \gg (i\times 2))\&3)));$\\
\STATE$w2 \leftarrow (w2 \oplus (1 \ll ((z2 \gg (i\times 2))\&3)));$\\
\STATE$w3 \leftarrow (w3 \oplus (1 \ll ((z3 \gg (i\times 2))\&3)));$\\
\STATE$w4 \leftarrow (w4 \oplus (1 \ll ((z4 \gg (i\times 2))\&3)));$\\
\ENDWHILE
\STATE$\textbf{if}~(t1 \neq 0)~\textbf{then}~w1 \leftarrow (w1 \oplus (1 \ll ((z1 \gg 24)\&3)));$ \\
\STATE$\textbf{if}~(t2 \neq 0)~\textbf{then}~w2 \leftarrow (w2 \oplus (1 \ll ((z2 \gg 24)\&3)));$ \\
\STATE$\textbf{if}~(t3 \neq 0)~\textbf{then}~w3 \leftarrow (w3 \oplus (1 \ll ((z3 \gg 24)\&3)));$ \\
\STATE$\textbf{if}~(t4 \neq 0)~\textbf{then}~w4 \leftarrow (w4 \oplus (1 \ll ((z4 \gg 24)\&3)));$ \\
\STATE$z \leftarrow z \oplus w1 \oplus (w2 \ll 4) \oplus (w3 \ll 8) \oplus (w4 \ll 12);$\\
\STATE$r \leftarrow z;$\\
\STATE$\textbf{return} ~r;$\\
\end{algorithmic}
\caption{An efficient and cryptographically secure generator based on CIPRNG version 1}
\label{Version 1 CI cs}
\end{algorithm}

The internal state $x$ is a vector of $16$ bits, whereas two $64$-bit XORshift generators 
($xorshift1(), xorshift2()$) are provided as entropy sources. As it can be seen in the algorithm, 
the two outputs of XORshift generators are spread into four $32$-bit integers. Then for each integer, there are $16$ $2-$bits components that can be found; every $12$ of these components are used to update the states. Lastly, the $4$ least significant bits (LSBs) of the output $bbs()$ of the Blum Blum Shub generator
decide if the state must be updated with the considered $13$-bits block or not. 

According to Section~\ref{Security Analysis Version 1 CI}, this generator based on CIPRNG version 1 can turn to be cryptographically secure, if the PRNG1 entropy source is cryptographically secure. Here, this inputted generator is the well known BBS, which is believed to be the most secured PRNG method currently available~\cite{vmd}. The $t$ value is computed by a BBS with a modulo $m$ equal to $32$ bits. Then the $log(log(m))$ LSBs of $t$ can be treated as secure, this is why we only considerate $4$ LSBs in this algorithm.

\begin{figure*}
\begin{center}
  \subfigure[XORshift]{\includegraphics[width=5.5cm]{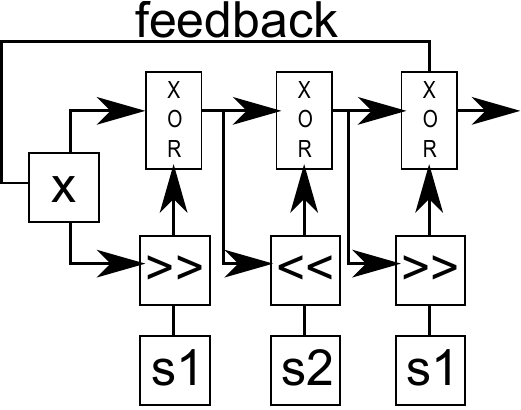}
  \label{xorshift verilog}}
  \subfigure[BBS]{\includegraphics[width=5.5cm]{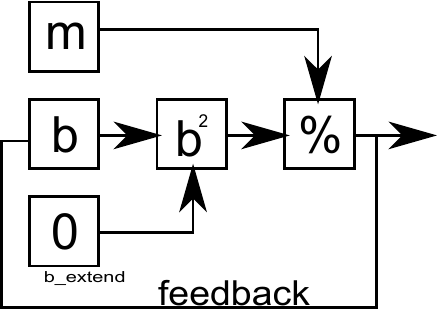}
  \label{BBS verilog}}
  \subfigure[The proposed CIPRNG]{\includegraphics[width=15cm]{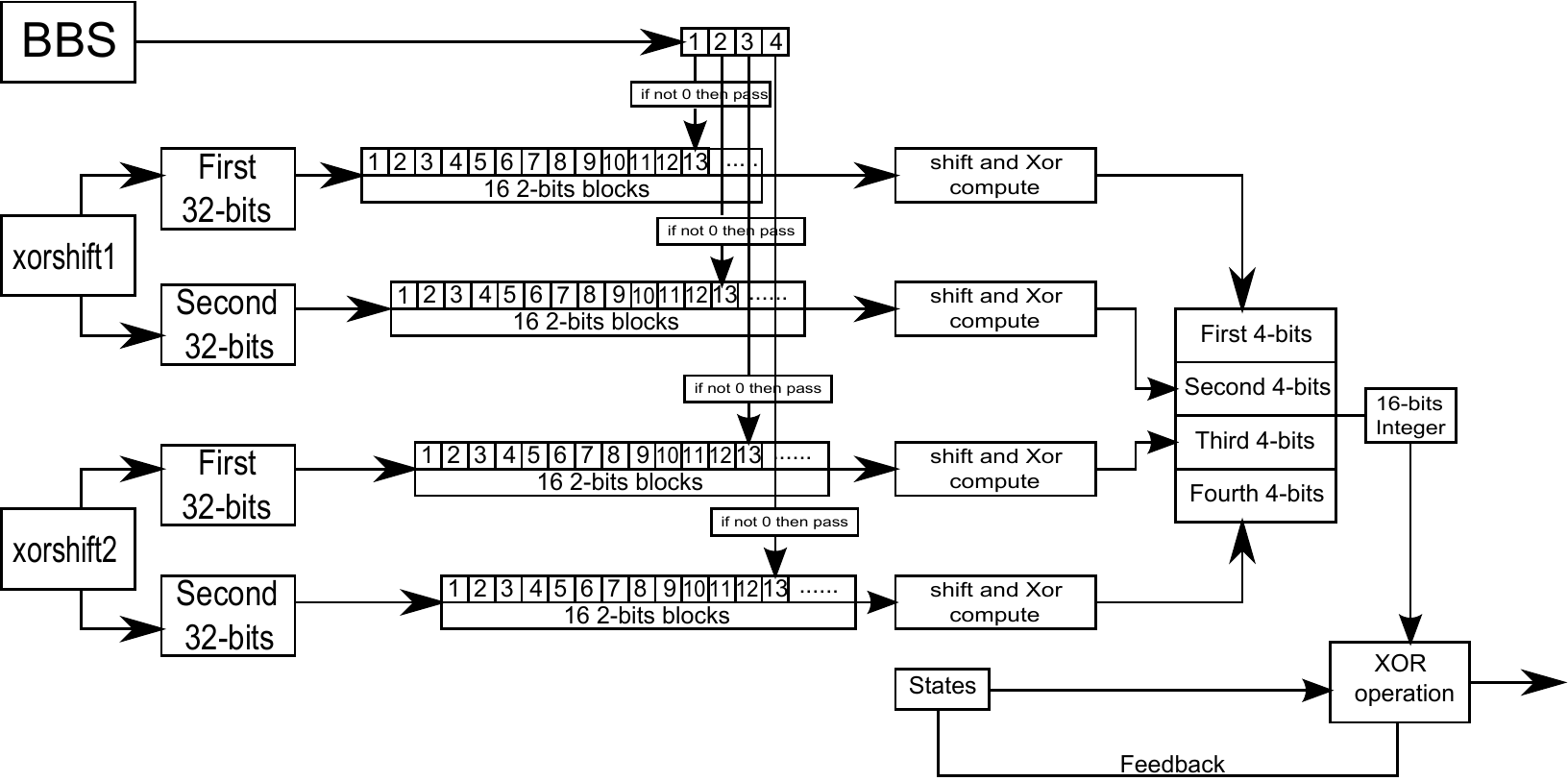}
  \label{CI verilog}}
\end{center}
\caption{The processing structure for BBS in FPGA (per clock step)}
\end{figure*}
Following the approach detailed in~\cite{bfg12a:ip}, we thus have used chaotic iterations in order to improve the statistical behavior of the inputted generators.
Here, two coupled $64$ bits XORshift generators together with one BBS are applied. By doing so, we obtain in Algorithm \ref{Version 1 CI cs} a generator being both chaotic and cryptographically secure~\cite{DBLP:journals/corr/abs-1112-5239}.

\begin{table*}
  \renewcommand{\arraystretch}{1.3}
  \caption{NIST SP 800-22 test results ($\mathbb{P}_T$)}
  \label{nist}
  \centering
  \begin{tabular}{|l||c|c|c|}
    \hline
    Method &CIPRNG & XORshift & BBS  \\ \hline\hline
    Frequency (Monobit) Test 			& 0.073128 &  0.145326 & 0.32435 \\ \hline
    Frequency Test within a Block  			& 0.719128 &  0.028817& 0.000000 \\ \hline
    Runs Test 					& 0.314992 &  0.739918 & 0.000000 \\ \hline
    Longest Run of Ones in a Block Test 		& 0.445121 &  0.554420 & 0.000000  \\ \hline
    Binary Matrix Rank Test 			& 0.888124 &  0.236810 & 0.000000\\ \hline
    Discrete Fourier Transform (Spectral) Test	& 0.912003 &  0.514124 & 0.000000 \\ \hline
    Non-overlapping Template Matching Test* 	& 0.500459 &  0.512363 & 0.000000\\ \hline
    Overlapping Template Matching Test   		& 0.702445 &  0.595549 & 0.000000 \\ \hline
    Universal Statistical Test   	& 0.666230 &  0.122325 & 0.000000\\ \hline
    Linear Complexity Test  			& 0.475761 &  0.249284  & 0.000000\\ \hline
    Serial Test* (m=10) 				& 0.780099 &  0.495847 &  0.043355\\ \hline
    Approximate Entropy Test (m=10) 		& 0.679102 &  0.000000 &  0.000000 \\ \hline
    Cumulative Sums (Cusum) Test* 			& 0.819200 &  0.074404 & 0.000000\\ \hline
    Random Excursions Test* 			& 0.697803 &  0.507812 & 0.000000 \\ \hline
    Random Excursions Variant Test* 		& 0.338243 &  0.289594 &  0.000000  \\ \hline
    Success & 15/15& 14/15 & 2/15 \\ \hline
  \end{tabular}
\end{table*}

Table~\ref{nist} shows the test results of the proposed CIPRNG against the NIST battery~\cite{ANDREW2008}. Results of XORshift and BBS are provided too. 
According to NIST test suite, the sole BBS generator algorithm cannot produce a statistically perfect output. This is not contradictory with Prop.~\ref{cryptopreuve}, as the cryptographically secure property is an asymptotic one: 
even though the Blum Blum Shum generator is cryptographically secure (which is a  property independent from the chosen modulo $m$), the very small value chosen for $m$ makes it unable to pass the NIST battery. Obviously, best statistical performances are obtained using the proposed CIPRNG.

\subsection{FPGA Design}
\label{FPGA design}
In order to take benefits from the computing power of FPGA, a whole processing needs to spread into several independent blocks of threads that can be computed simultaneously. In general, the larger the number of threads is, the more logistic elements of FPGA are used, and the less branching  instructions are used  (if,  while,  ...),  the  better the performances on FPGA are. Obviously, having these requirements in  mind, it is possible to build a program similar to the algorithm presented in Algorithm \ref{Version 1 CI cs}, which produces pseudorandom numbers with chaotic properties on FPGA. To do so, Verilog-HDL~\cite{verilog} has been used to help programming. In this generator, there are three PRNG objects that use the exclusive or operation, two XORshifts, and a BBS, their processing are described thereafter.

\subsubsection{Design of XORshift}
The structure of XORshift designed in Verilog-HDL is shown in Figure~\ref{xorshift verilog}. There are four inputs:
\begin{itemize}
\item The first one is the initial state, which costs 64 bits 
of register units,
\item the other three ones are used to define the shift operations.
\end{itemize}
Let us remark that, in FPGA, this shift operation costs nothing, as it simply consists in using different bit cells of the input. We can thus conclude that there are $64 - s1 + 64 -s2 + 64 -s3 = 192 - s1 - s2 - s3$ logic gates elements that are required for the XORshifts processing. 

\subsubsection{Design of BBS}
Figure~\ref{BBS verilog} gives the proposed design of the BBS generator in FPGAs. There are two inputs of $32$ bits, namely $b$ and $m$. Register $b$ stores the state of the system at each time (after the square computation). $m$ is also a register that saves the value of $M$, which must not change. Another register $b\_extend$ is used to combine $b$ to a data having $64$ bits, with a view to avoid overflow. After the last computation, the three LSBs from the output of $\%$ are taken as output. Let us notice that a BBS is performed at each time unit.

\subsubsection{Design of CI}
Two XORshifts and one BBS are connected to work together, in order to compose the proposed CIPRNG (see Figure~\ref{CI verilog}). 
As it can be shown, the four bits of the BBS output are switches for the corresponding $32$ bits outputs from XORshift. Every round of the processing costs two time units to be performed: in the first clock, the three PRNGs are processed in parallel, whereas in the second one, the results of these generators are combined with the current state of the system, in order to produce the output of $16$ bits.

\begin{figure*}
\begin{center}
  \includegraphics[width=11cm]{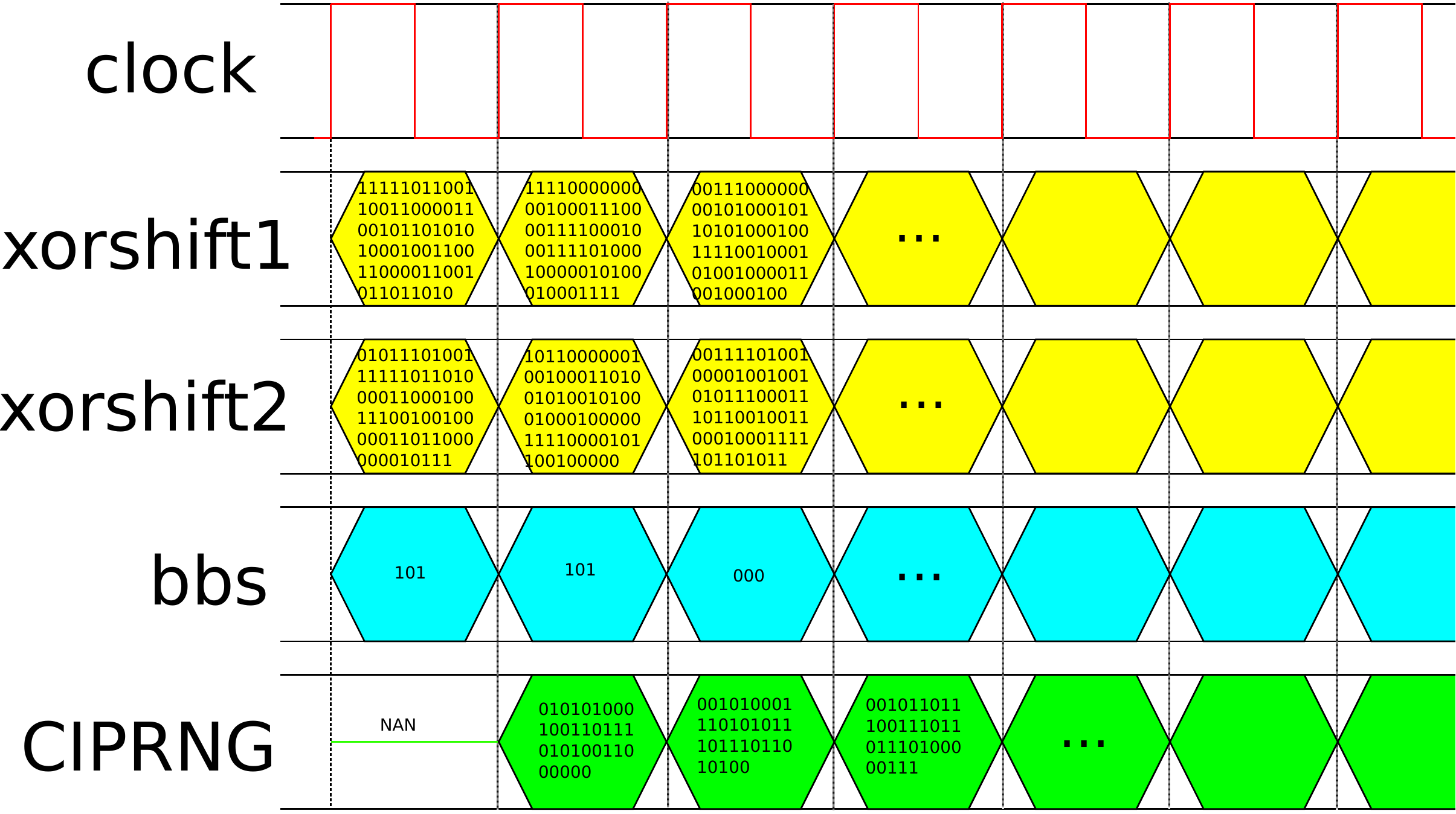}
\end{center}
\caption{Outputs of each component in clock step unit}
 \label{work_flow}
\end{figure*} 
In our experiments, the type $EP2C8Q208C8$ from Altera company's CYCLONE II FPGA series has been used. By default, its working frequency is equal to $50$ MHz. However, it is possible to increase it until $400$ MHz by using the phase-lock loop (PLL) device. In that situation, the CIPRNG designed on this FPGA can produce over $6000$ Mbits per second (that is, $400 (MHz) \times 16 (bits)$, see Figure~\ref{work_flow}), while using $6114$ of the $8256$ logic elements in $EP2C8Q208C8$. This is nearly $30$ times faster than when it is processed in continuous method. 

In the next section, an application of this CSPRNG designed on FPGA in the information hiding security fields is detailed, to show that this hardware pseudorandom generator is ready to use.

\section{An Information Hiding Application}
\label{application}

Information hiding has recently become a major information security technology, especially with the increasing importance and widespread distribution of digital media th-rough the Internet \cite{Wu2007bis}. It includes several techniques like digital watermarking. The aim of digital watermarking is to embed a piece of information into digital documents, such as pictures or movies. This is for a large panel of reasons, such as: copyright protection, control utilization, data description, content authentication, and data integrity. For these reasons, many different watermarking schemes have been proposed in recent years. Digital watermarking must have essential characteristics, including: security, imperceptibility, and robustness. Chaotic methods have been proposed to encrypt the watermark before embedding it in the carrier image for these security reasons. In this paper, a watermarking algorithm based on the chaotic PRNG presented above is given, as an illustration of use of this PRNG based on CI.

\subsection{Most and least significant coefficients}
The definitions of most and least significant coefficients are shown at first, as they have been formerly introduced in~\cite{gfb10:ip,bg10:ip}.

\begin{definition}
For a given image, the most significant coefficients (in short MSCs), are
coefficients that allow the description of the relevant part of the image,
\emph{i.e.}, its most rich part (in terms of embedding information), through a sequence of bits.
\end{definition}

\begin{definition}
By least significant coefficients (LSCs), we mean a translation of some
insignificant parts of a medium in a sequence of bits (insignificant can be understand as: 
``which can be altered without sensitive damages'').
\end{definition}

These LSCs can be for example, the last three bits of the gray level of each pixel, in the case of a spatial domain watermarking of a gray-scale image.

In the proposed application, LSCs are used during the embedding stage: some of the least significant coefficients of the carrier image will be chaotically chosen and replaced by the bits of the mixed watermark. With a large number of LSCs, the watermark can be inserted more than once and thus the embedding will be more secure and robust, but also more detectable. The MSCs are only useful in the case of authentication: encryption and embedding stages depend on them. Hence, a coefficient should not be defined at the same time, as a MSC and a LSC; the last can be altered, while the first is needed to extract the watermark. For a more rigorous definition of such LSCs and MSCs see, \emph{e.g.},~\cite{bcg11:ij}.

\subsection{Stages of the algorithm}

We recall now a formerly introduced watermarking scheme, which consists of two stages: (1) mixture of the watermark and (2) its embedding~\cite{bg10b:ip}.

\subsubsection{Watermark mixture}
Firstly, for safety reasons, the watermark can be mixed before its embedding into the image. A common way to achieve this stage is to use the bitwise exclusive or (XOR), for example, between the watermark and the above PRNG. In this paper and similarly to~\cite{bg10b:ip}, we will use another mixture scheme based on chaotic iterations. Its chaotic strategy, defined with our PRNG, will be highly sensitive to the MSCs, in the case of an authenticated watermark, as stated in ~\cite{guyeux:topological}.

\subsubsection{Watermark embedding}

Some LSCs will be substituted by all bits of the possibly mixed watermark. To choose the sequence of LSCs to be altered, a number of integers, less than or equal to the number $\mathsf{N}$ of LSCs corresponding to a chaotic sequence $\left( U^{k}\right) _{k}$, is generated from the chaotic strategy used in the mixture stage. Thus, the $U^{k}$-th least significant coefficient of the carrier image is substituted by the $k^{th}$ bit of the possibly mixed watermark. In the case of authentication, such a procedure leads to a choice of the LSCs that are highly dependent on the MSCs.

\subsubsection{Extraction}
The chaotic strategy can be regenerated, even in the case of an authenticated watermarking because the MSCs have not been changed during the stage of embedding the watermark. Thus, the few altered LSCs can be found, the mixed watermark can then be rebuilt, and the original watermark can be obtained. If the watermarked image is attacked, then the MSCs will change. Consequently, in the case of authentication and due to the high sensitivity of the embedding sequence, the LSCs designed to receive the watermark will be completely different. Hence, the result of the recovery will have no similarity with the original watermark: authentication is reached.

\begin{figure*}
\centering
\subfigure[General structure]{\includegraphics[width=6cm]{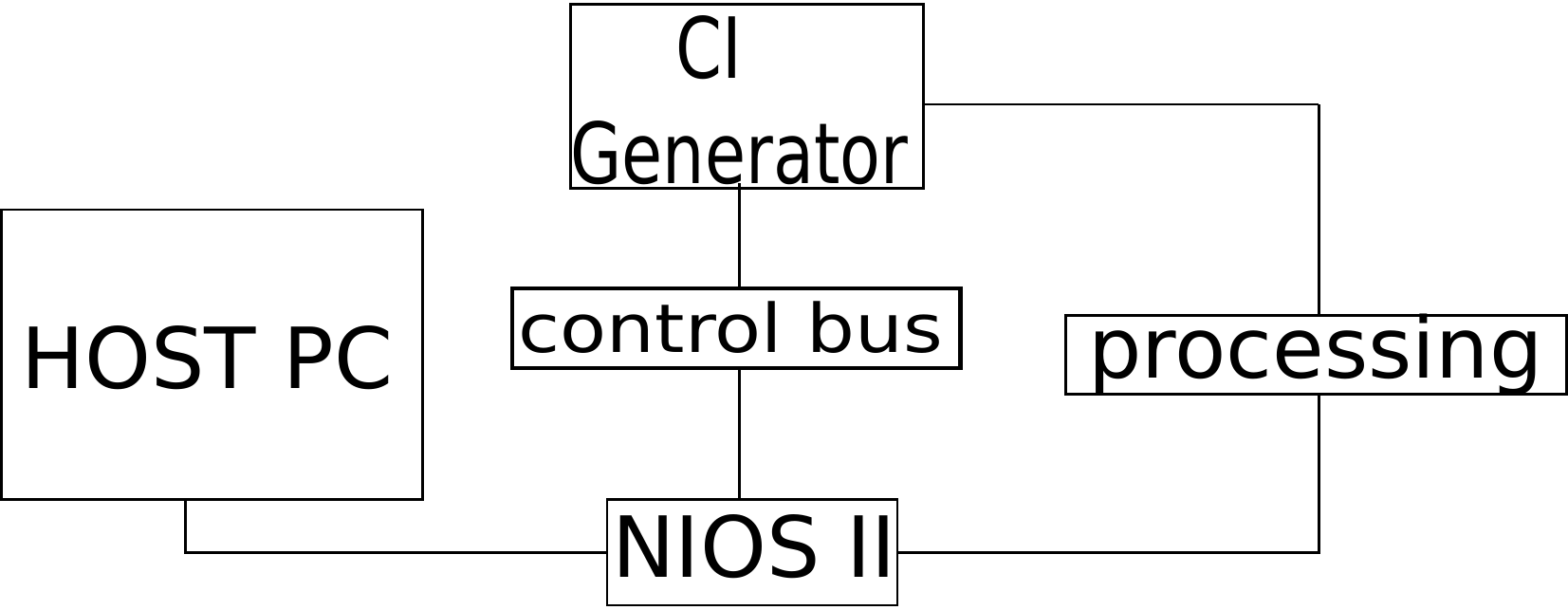}
\label{nios}} \hspace{0.5cm}
\subfigure[Schematic view]{\includegraphics[scale=0.4]{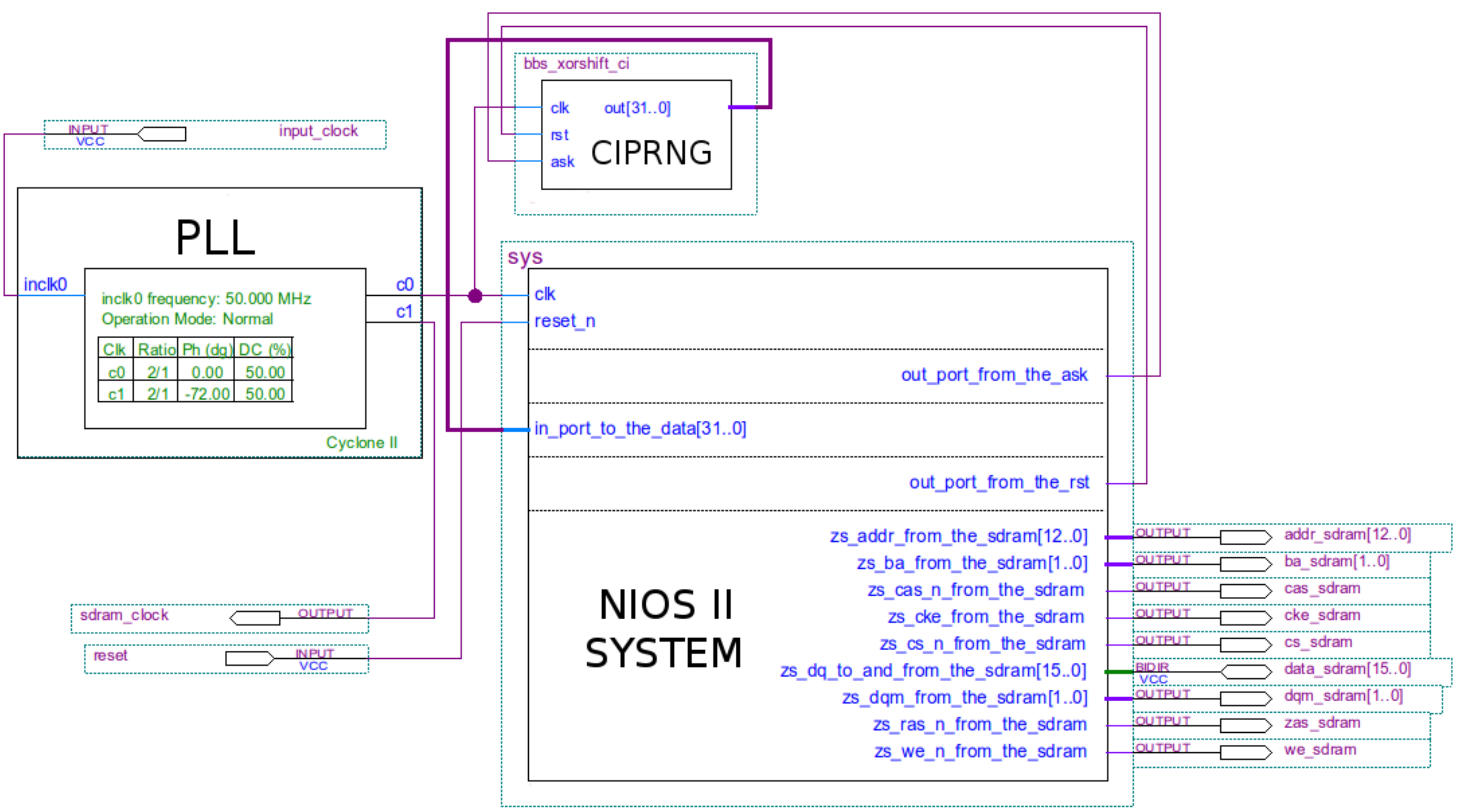}
\label{nios2}} \hspace{0.5cm}
\caption{NIOS II setting in FPGA}
\end{figure*}

\subsection{The FPGA setting}
The 32-bit embedded-processor architecture designed specifically for the Altera family of FPGAs 
is applied in this information hiding
specific application. Nios II incorporates many enhancements over the original 
Nios architecture, making it more suitable for a wider range of embedded computing applications, 
from DSP to system-control~\cite{nios}. 

Figure~\ref{nios} shows the structure of this application. 
The NIOS II system can read the image from the HOST computer side. 
Via the bus control, pseudorandom bits are 
produced into the FPGA and according to the CIPRNG. Then the results are transmitted back 
into the host. 

\begin{table*}
\caption{Robustness agains attacks}
\label{Cropping attacks}
\label{Rotation attacks}
\label{JPEG compression attacks}
\label{Gaussian noise attacks}
\begin{center}
{\footnotesize
\begin{tabular}{c c c c c}
\toprule
Attacks&\multicolumn{2}{c}{UNAUTHENTICATION} & \multicolumn{2}{c}{AUTHENTICATION}
\\ \midrule
\multirow{5}*{\rotatebox{90}{Cropping}} &Size (pixels) & Similarity & Size (pixels) & Similarity \\ \cmidrule(r){2-5}
		&10 & 99.18\% & 10 & 50.06\% \\
		&50 & 96.13\% & 50 & 54.44\% \\
		&100 & 91.21\% & 100 & 52.04\% \\
		&200 & 66.16\% & 200 & 50.88\% \\  \midrule
\multirow{5}*{\rotatebox{90}{Rotation } }		&Angle (degree) & Similarity & Angle (degree) & Similarity \\ \cmidrule(r){2-5}
		&2 & 96.11\% & 2 & 71.41\% \\
	&5 & 93.66\% & 5 & 60.03\% \\
		&10 & 92.55\% & 10 & 53.87\% \\
		&25 & 82.05\% & 25 & 50.09\% \\ \midrule
\multirow{5}*{\rotatebox{90}{JPEG compression}}	&&&&\\
	&Compression & Similarity & Compression & Similarity \\ \cmidrule(r){2-5}
		&2 & 81.90\% & 2 & 53.79\% \\
&5 & 66.43\% & 5 & 55.51\% \\
		&10 & 61.82\% & 10 & 51.24\% \\
		&20 & 54.17\% & 20 & 47.33\% \\
&&&&\\
\midrule
\multirow{5}*{\rotatebox{90}{Gaussian noise}}	\\

		&Standard dev. & Similarity & Standard dev. & Similarity \\ \cmidrule(r){2-5}
		&1 & 75.16\% & 1 & 51.05\% \\
&2 & 62.33\% & 2 & 50.35\% \\
		&3 & 56.34\% & 3 & 49.95\% \\ \bottomrule
\end{tabular}
}\\[0pt]
\end{center}
\end{table*}
In Figure~\ref{nios2}, the NIOS 
II is using the most powerful version the CYCLONE II can support (namely, the NIOS II/f one).
$4$ KB on chip memory and $16$ MB SDRAM are set, and the $PLL$ device is used to enhance the clock frequency from $50$ 
to $200$ MHz. Finally, the data connection bus NIOS II system and generator works in 32 bits.

\subsection{Results}
For evaluating the efficiency and the robustness of the application, some attacks are performed on some chaotically watermarked images. For the attacks, the similarity percentages with the original watermark are computed. These percentages are the numbers of equal bits between the original and the extracted watermark, shown as a percentage. A result less than or equal to $50\%$ implies that the image has probably not been watermarked.

\subsubsection{Cropping attack}
In this kind of attack, a watermarked image is cropped. In this case, the
results in Tab.\ref{Cropping attacks} have been obtained.
In Figure~\ref{fig:Cropping attack}, the decrypted watermarks are shown
after a crop of 50 pixels and after a crop of 10 pixels, in the
authentication case.
\begin{figure}[h!]
\centering
\subfigure[Unauthentication \newline ($10\times 10$)]{\includegraphics[width=2cm]{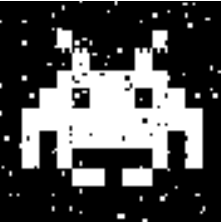}
\label{fig:watermark}} \hspace{0.5cm}
\subfigure[Authentication \newline ($10\times 10$)]{\includegraphics[width=2cm]{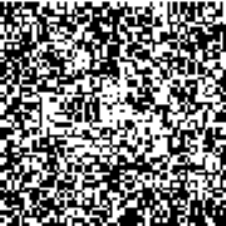}
\label{fig:LenaMarqueDwt}} \hspace{0.5cm}
\subfigure[Unauthentication \newline ($50\times 50$)]{\includegraphics[width=2cm]{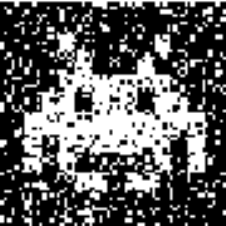}
\label{fig:LenaDiff}}
\caption{Extracted watermark after a cropping attack (zoom $\times 2$)}
\label{fig:Cropping attack}
\end{figure}

By analyzing the similarity percentage between the original and the
extracted watermark, we can conclude that in the case of unauthentication, the watermark still remains after a cropping attack. The desired robustness is reached. It can be noticed that cropping sizes and percentages are rather proportional. In the case of authentication, even a small change of the carrier image (a crop by $10\times 10$ pixels) leads to a really different extracted watermark. In this case, any attempt to alter the carrier image will be signaled, thus the image is well authenticated.

\subsubsection{Rotation attack}
Let $r_{\theta }$ be the rotation of angle $\theta $ around the center $%
(128, 128)$ of the carrier image. So, the transformation $r_{-\theta }\circ r_{\theta }$ is applied to the watermarked image. The results in Tab.\ref{Rotation attacks} have been obtained.
The same conclusion as above can be declaimed.

\subsubsection{JPEG compression}

A JPEG compression is applied to the watermarked image, depending on a
compression level. This attack leads to a change of
the representation domain (from spatial to DCT domain). In this case, the
results in Tab.\ref{JPEG compression attacks} have been obtained, illustrating a good authentication through JPEG attack.
As for the unauthentication case, the watermark still remains after a compression level equal to 10. This is a good result if we take into account the fact that we use spatial embedding.

\subsubsection{Gaussian noise}

A watermarked image can be also attacked by the addition of a Gaussian noise, depending on a standard deviation. In this case, the results in Tab.\ref{Gaussian noise attacks} are obtained, which are quite satisfactory
another time.

\subsection{Discussion}

Generally, the quality of a PRNG depends, to a large extent, on the following criteria: randomness, 
uniformity, independence, storage efficiency, and reproducibility. A chaotic sequence may satisfy these 
requirements and also other chaotic properties, as ergodicity, entropy, and expansivity. A chaotic 
sequence is extremely sensitive to the initial conditions. That is, even a minute difference in the 
initial state of the system can lead to enormous differences in the final state, even over fairly small 
timescales. Therefore, chaotic sequence fits the requirements of pseudorandom sequence well. Contrary to 
XORshift, our generator possesses these chaotic properties~\cite{guyeux09,wang2009}.
However, despite a large number of papers published in the field of chaos-based pseudorandom generators, 
the impact of this research is rather marginal. This is due to the following reasons: almost all PRNG 
algorithms using chaos are based on dynamical systems defined on continuous sets (\emph{e.g.}, the set
of real numbers). So these generators are usually slow, requiring considerably more storage space and lose 
their chaotic properties during computations. These major problems restrict their use as 
generators~\cite{Kocarev2001}.

In the CIPRNG method, we do not simply integrate chaotic maps hoping that the implemented algorithm remains 
chaotic. Indeed, the PRNG we conceive is just discrete chaotic iterations and we have proven 
in~\cite{guyeux09} that these iterations produce a topological chaos as defined by Devaney: 
they are regular, transitive, and sensitive to initial conditions. This famous definition of a 
chaotic behavior for a dynamical system implies unpredictability, mixture, sensitivity, and uniform 
repartition. Moreover, as only integers are manipulated in discrete chaotic iterations, the chaotic 
behavior of the system is preserved during computations, and these computations are fast.

These chaotic properties are behind the observed robustness of the proposed information hiding
scheme: transitivity, for instance, implies that the watermark is spread over the whole host
image, making it impossible to remove it by a simple crop. Regularity implies that the watermark
is potentially inserted several times, reinforcing the robustness obtained by topological mixing 
and transitivity. Expansivity and sensitivity guarantee us that authentication is reached, as in 
an authenticated watermarking, MSBs are taken into account, and even a slight alteration of these
bits leads to a completely different extracted watermark due to these metrical properties.
Finally, unpredictability plays obviously an important role in the security of the whole process againts
malicious attacks, even if this role is difficult to measure precisely in practice.

\section{Conclusion and future work}
\label{Conclusions and Future Work}

In this paper, the pseudorandom generator proposed in our former research work has been developed in terms of efficiency. We also have proven that this generator based on hardware can be cryptographically secure. By using a BBS generator and due to a new approach in the way the Version 1 CI PRNG uses its strategies, the generator based on chaotic iterations works faster and is more secure.
This new CIPRNG is able to pass NIST test suite when considering software implementation, and to reach $6000$ Mbps (with the throughtput is about $132/16$ each processing round) in FPGA hardware. These considerations enable us to claim that this CIPRNG(BBS, XORshift) offers a sufficient speed and level of security for a whole range of applications where secure generators are required as cryptography and information hiding. 

In future work, we will continue to explore new strategies and iteration functions. The chaotic behavior of the proposed generator
will be deepened by using the various tools provided by the mathematical theory of chaos.  
Additionally a probabilistic study of its security will be done. 
Lastly, new applications in computer science will be proposed, among other things in the Internet security field.

\bibliographystyle{plain} 
\bibliography{refs}
\end{document}